\documentclass[letterpaper,10pt,conference]{ieeeconf}

\IEEEoverridecommandlockouts
\let\labelindent\relax

\usepackage{xcolor} 
\usepackage{amsmath,amssymb,amsfonts,amsthm,mathtools}
\usepackage{cite}
\usepackage{graphicx}
\usepackage{booktabs}
\usepackage{enumitem}
\usepackage{algorithm}
\usepackage{algpseudocode}
\usepackage{threeparttable}

\algrenewcommand\algorithmicrequire{\textbf{Input:}}
\algrenewcommand\algorithmicensure{\textbf{Output:}}

\newtheorem{assumption}{Assumption}
\newtheorem{lemma}{Lemma}
\newtheorem{proposition}{Proposition}
\newtheorem{theorem}{Theorem}
\newtheorem{corollary}{Corollary}
\newtheorem{remark}{Remark}

\newcommand{\vect}[1]{\boldsymbol{#1}}
\newcommand{\R}{\mathbb{R}}
\newcommand{\Rpos}{\mathbb{R}_{\geq 0}}
\newcommand{\cW}{\mathcal{W}}
\newcommand{\cX}{\mathcal{X}}
\newcommand{\cU}{\mathcal{U}}
\newcommand{\cD}{\mathcal{D}}
\newcommand{\cB}{\mathcal{B}}
\newcommand{\rfun}{\vect{r}}
\newcommand{\range}{\operatorname{range}}

\title{Anytime Primal--Dual Certification of the Maximum\\Disturbance Radius in Robust MPC}

\author{Wenqi Cai$^{1}$,
Muhammad Bakr Abdelghany$^{2}$,
Kyriakos G. Vamvoudakis$^{3}$,
and Anthony Tzes$^{1}$
\thanks{$^{1}$Wenqi Cai and Anthony Tzes are with the Electrical Engineering Program, New York University Abu Dhabi (NYUAD), 129188, United Arab Emirates. Email: wenqi.cai@nyu.edu; anthony.tzes@nyu.edu}
\thanks{$^{2}$Muhammad B. Abdelghany is with the Electrical Engineering Department, Khalifa University, United Arab Emirates (e-mail: muhammad.babdelghany@ku.ac.ae).}
\thanks{$^{3}$Kyriakos G. Vamvoudakis is with the School of Aerospace Engineering, Georgia Institute of Technology, Atlanta, GA 30332, United States. Email: kyriakos@gatech.edu}
\thanks{This work was supported in part by: a) NSF under grant Nos. SLES-$2415479$, CPS-$2227185$, b) NASA ULI under grant No. $80$NSSC$25$M$7104$, and c) the NYUAD Center for AI and Robotics, funded by Tamkeen under the NYUAD Research Institute Award CG010.}
}

\begin{document}
\maketitle
\thispagestyle{empty}
\pagestyle{empty}

\begin{abstract}
Adjustable-set robust model predictive control (MPC) characterizes a state-dependent maximum disturbance radius, which can be interpreted as a certified robustness reserve. Existing methods primarily focus on optimizing and propagating this reserve under in-set disturbances. This letter investigates how much reserve remains after a finite out-of-set disturbance without immediately re-solving the full optimization problem. To this end, we propose a two-sided reserve-depletion envelope formed by independent primal and dual correction hierarchies, which supply monotone lower and upper bounds, respectively. The envelope remains valid across active-set changes, has an online-computable width, and contracts monotonically under subspace expansion. Leveraging its finite-step exactness, we present a basis-first adaptive algorithm that operates in an anytime manner: every completed reduced solve returns a valid certificate, enabling early termination once a prescribed tolerance is reached. Across the tested cases, numerical studies report no certificate violations and median speedups of up to $4.97\times$ over warm-started full re-optimization at a $2\%$ certificate-width tolerance.
\end{abstract}

\begin{keywords}
Robust model predictive control, adjustable uncertainty sets, primal--dual certificate, anytime optimization.
\end{keywords}
\section{Introduction}
\par Geometric margins alone do not characterize disturbance tolerance in constrained dynamical systems. States with the same distance to a constraint boundary can have markedly different recoverability because dynamics, input limits, and future constraint interactions govern robust feasibility. In the adjustable-set tube-based robust MPC formulation considered here, this tolerance is quantified by the state-dependent maximum disturbance radius: the largest scaling of a prescribed disturbance set for which the finite-horizon problem under fixed ancillary feedback remains feasible. We refer to this scalar as the \emph{certified robustness reserve}, or simply the \emph{reserve}. When a realized disturbance exceeds the certified set, the realized successor is evaluated against the modeled successor obtained by projecting the disturbance onto that set. Their signed reserve difference may represent either depletion or gain. Exact evaluation of this \emph{signed reserve depletion} requires solving the full reserve linear program (LP) at both successor states, which can be burdensome under limited or variable online computation.

\par Robust MPC provides several mechanisms for constrained operation under bounded disturbances. Classical robust-feasibility conditions characterize when admissible operation can be sustained \cite{kerrigan2001robust}, while tube-based MPC enforces recursive constraint satisfaction through invariant error sets \cite{mayne2005robust}. Adjustable uncertainty-set formulations jointly optimize the admissible set and control policy \cite{zhang2017robust,kim2018robust}, with zonotopic constructions yielding tractable adjustable tubes \cite{raghuraman2021tube}. Related approaches use reactive safety modes \cite{carson2013safety}, feasibility governors \cite{skibik2021feasibility}, closed-loop re-solvability conditions \cite{parsi2024once}, and reference or command governors \cite{garone2017reference}. These methods provide control-level mechanisms for maintaining or recovering admissible operation, yet they do not directly quantify the residual robustness of the system after perturbation.

\par Assessing this change online introduces a distinct computational challenge. Multiparametric programming provides explicit solutions over critical regions \cite{tondel2003algorithm}, whereas LP post-optimality analysis characterizes local primal and dual variations while the relevant basis or partition is preserved \cite{greenberg2000sensitivity}. Warm-started active-set methods accelerate repeated solves \cite{ferreau2008online,xie2023maximal}, while real-time iteration, suboptimal MPC, and time-distributed optimization limit per-sample computation \cite{diehl2005real,zeilinger2011realtime,liao2020time}. Robust-to-early-termination MPC, primal--dual stopping criteria, and learned primal--dual policies retain selected closed-loop or optimization guarantees under incomplete computation \cite{hosseinzadeh2023robust,pavlov2019early,zhang2019safe}. Collectively, these developments provide tools for characterizing parametric solutions, accelerating repeated optimization, and operating with incomplete solves. The complementary question considered here is how the robustness reserve changes relative to the modeled successor after a finite out-of-set perturbation, and how this signed change can be certified under a limited online computational budget.

\par To address this question, this work develops an anytime primal--dual certification framework for constrained linear systems under fixed tube feedback. Using the shifted feasible plan and the current optimal multiplier as primal and dual anchors, respectively, we construct two independently expandable certificate hierarchies. The main contributions are:
\begin{enumerate}
[leftmargin=*,itemsep=1pt,topsep=2pt]
\item \emph{Signed Primal Hierarchy:} We construct nested primal-feasible lower reserve certificates with monotone improvement, an implementable primal repair, signed reserve adjustment capturing both depletion and gain, and finite exactness;
\item \emph{Reduced Dual Hierarchy:} We construct nested dual nullspace corrections yielding valid upper reserve certificates with monotone tightening and finite exactness;
\item \emph{Anytime Reserve Certification:} We combine both hierarchies into a two-sided reserve-depletion envelope and construct a basis-first adaptive algorithm that maintains validity after every completed reduced solve and across active-set changes, supports early termination under prescribed tolerance, and becomes exact after finitely many expansions.
\end{enumerate}


\emph{Notation:} $\Rpos$ denotes the set of nonnegative reals. For a matrix $M$, $M_j$ denotes its $j$th row, while $\range(M)$ and $\ker(M)$ denote its range and nullspace, respectively. Vector inequalities are entrywise. For a nonempty closed convex set $\mathcal S$, $\Pi_{\mathcal S}(\cdot)$ denotes the Euclidean projection onto $\mathcal S$.
\section{Maximum Disturbance Radius}
\subsection{System and reserve formulation}

\par Consider the constrained linear system
\begin{equation}
\vect{x}_{k+1}=A\vect{x}_{k}+B\vect{u}_{k}+E\vect{w}_{k},
\label{eq:system}
\end{equation}
where $\vect{x}_{k}\in\R^{n_x}$, $\vect{u}_{k}\in\R^{n_u}$, and $\vect{w}_{k}\in\R^{n_w}$. The state and input constraints are polytopic, $\vect{x}_{k}\in\cX\coloneqq\{\vect{x}:F_x\vect{x}\leq\vect{g}_x\}$ and $\vect{u}_{k}\in\cU\coloneqq\{\vect{u}:F_u\vect{u}\leq\vect{g}_u\}$. Let $\cW_0\subset\R^{n_w}$ be a compact polytope containing the origin, and let $\delta\cW_0$ with scalar $\delta\geq0$ denote the modeled disturbance set. 

\par We adopt a standard tube-based robust MPC formulation with fixed ancillary feedback and robust terminal ingredients. The support-function tightenings induced by $\delta\cW_0$ scale linearly with $\delta$. After eliminating the predicted states, let $\vect{z}\in\R^{n_z}$ collect the nominal input sequence and the remaining auxiliary variables. The resulting state-dependent maximum admissible scaling $\rho(\vect{x})$, referred to as the \emph{reserve}, is given by the parametric linear program (LP)
\begin{subequations}
\label{eq:reserveLP}
\begin{align}
\rho(\vect{x})\coloneqq\max_{\vect{z},\delta}&\quad\delta
\label{eq:reserveLP_obj}\\
\mathrm{s.t.}\quad&G\vect{z}+\vect{q}\delta\leq\vect{b}+H\vect{x},
\label{eq:reserveLP_con}\\
&\delta\geq0.
\label{eq:reserveLP_nonneg}
\end{align}
\end{subequations}
Here, $G\in\R^{n_c\times n_z}$, $\vect{q},\vect{b}\in\R^{n_c}$, and $H\in\R^{n_c\times n_x}$, where $n_c$ is the number of scalar inequalities in the condensed formulation. All constraints on $\vect{z}$ are included in \eqref{eq:reserveLP_con}. The vector $\vect{q}\geq\vect{0}$ collects the support-function-based tightening coefficients, with $q_j$ denoting the tightening of the $j$th inequality per unit increase in $\delta$.

\par Define $\rfun(\vect{x})\coloneqq\vect{b}+H\vect{x}$, domain $\cX_{\mathrm f}\coloneqq\{\vect{x}\in\cX:\eqref{eq:reserveLP}\text{ is feasible and bounded}\}$, and domain $\cX_\rho\coloneqq\{\vect{x}\in\cX_{\mathrm f}:\rho(\vect{x})>0\}$. 
\begin{assumption}[Shift feasibility]
\label{ass:shift}
For any feasible pair $(\vect{z}_{k},\delta_k)$ of \eqref{eq:reserveLP} at $\vect{x}_k$ and any modeled disturbance $\vect{w}_{k}^{0}\in\delta_k\cW_0$, applying the first control move $\vect{u}_k$ encoded by $\vect{z}_k$ admits a shifted decision vector $\mathsf{S}(\vect{z}_{k},\vect{w}_{k}^{0})$ satisfying
\begin{equation}
G\mathsf{S}(\vect{z}_{k},\vect{w}_{k}^{0})+\vect{q}\delta_k
\leq\rfun(\vect{x}_{k+1}^{0}),
\label{eq:shift_feasible}
\end{equation}
where $\vect{x}_{k+1}^{0}$ is the successor state under $\vect{w}_{k}^{0}$.
\end{assumption}
Assumption~\ref{ass:shift} is the standard tube-MPC shift condition and can be enforced via standard robust invariant terminal sets and terminal controllers\cite{mayne2005robust,kim2018robust}.
\subsection{Dual characterization}

\par The dual-feasible set associated with \eqref{eq:reserveLP} is
\begin{equation}
\cD\coloneqq\left\{\vect{\lambda}\in\Rpos^{n_c}:G^\top\vect{\lambda}=\vect{0},\ \vect{q}^\top\vect{\lambda}\geq1\right\}.
\label{eq:dual_set}
\end{equation}

\begin{lemma}[Parametric reserve bounds]
\label{lem:duality}
For every $\vect{x}\in\cX_{\mathrm f}$, strong duality holds and
\begin{equation}
\rho(\vect{x})=\min_{\vect{\lambda}\in\cD}\vect{\lambda}^{\top}\rfun(\vect{x}).
\label{eq:dual_problem}
\end{equation}
Consequently, any pair $(\tilde{\vect{z}},\tilde{\delta})$ that is primal-feasible at $\vect{x}$ and any $\tilde{\vect{\lambda}}\in\cD$ satisfy
\begin{equation}
\tilde{\delta}\leq\rho(\vect{x})\leq\tilde{\vect{\lambda}}^{\top}\rfun(\vect{x}).
\label{eq:generic_bounds}
\end{equation}
Moreover, $\rho$ is concave and piecewise affine on $\cX_{\mathrm f}$.
\end{lemma}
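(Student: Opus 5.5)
The plan is to derive the dual of \eqref{eq:reserveLP} explicitly and then apply standard LP duality. I will attach multipliers $\vect{\lambda}\in\Rpos^{n_c}$ to \eqref{eq:reserveLP_con} and $\mu\geq0$ to \eqref{eq:reserveLP_nonneg}. The Lagrangian is $\delta+\vect{\lambda}^\top(\rfun(\vect{x})-G\vect{z}-\vect{q}\delta)+\mu\delta$. Its supremum over the free variable $\vect{z}$ and over $\delta$ is finite exactly when $G^\top\vect{\lambda}=\vect{0}$ and $1-\vect{q}^\top\vect{\lambda}+\mu=0$. Eliminating $\mu\geq0$ turns the second condition into $\vect{q}^\top\vect{\lambda}\geq1$. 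The dual problem is therefore $\min\{\vect{\lambda}^\top\rfun(\vect{x}):\vect{\lambda}\in\cD\}$ with $\cD$ as in \eqref{eq:dual_set}. Note that $\cD$ does not depend on $\vect{x}$, which is the structural fact everything else rests on.

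For $\vect{x}\in\cX_{\mathrm f}$ the primal is feasible and bounded by definition. LP strong duality then gives dual feasibility and equality of the optimal values, which is \eqref{eq:dual_problem}. Both values are attained, so the min is justified. The bounds \eqref{eq:generic_bounds} follow from weak duality, checked directly. For a primal-feasible $(\tilde{\vect{z}},\tilde\delta)$ and $\tilde{\vect{\lambda}}\in\cD$, I will use the chain
\[
\tilde{\vect{\lambda}}^\top\rfun(\vect{x})\geq\tilde{\vect{\lambda}}^\top(G\tilde{\vect{z}}+\vect{q}\tilde\delta)=(\vect{q}^\top\tilde{\vect{\lambda}})\tilde\delta\geq\tilde\delta,
\]
where the last step uses $\tilde\delta\geq0$ and $\vect{q}^\top\tilde{\vect{\lambda}}\geq1$. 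Combined with $\tilde\delta\leq\rho(\vect{x})$ by the definition of the max and with \eqref{eq:dual_problem}, this yields both inequalities.

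For concavity, I will use that $\rho$ is the pointwise minimum over $\vect{\lambda}\in\cD$ of the affine functions $\vect{x}\mapsto\vect{\lambda}^\top(\vect{b}+H\vect{x})$, so it is concave on $\cX_{\mathrm f}$. One should check that $\cX_{\mathrm f}$ is convex. I will do this via the dual: for a feasible primal, boundedness is equivalent to $\cD\neq\emptyset$, which is independent of $\vect{x}$. Hence $\cX_{\mathrm f}$ equals $\cX$ intersected with the projection of the primal feasible polyhedron, which is a convex polyhedron. Alternatively one can take a convex combination of feasible pairs directly.

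For piecewise affinity, the step needing the most care is that $\cD$ may be unbounded. Since $\cD$ is a nonempty polyhedron contained in $\Rpos^{n_c}$, it contains no lines and so has finitely many vertices $\vect{\lambda}^{(1)},\dots,\vect{\lambda}^{(m)}$. Whenever the minimum in \eqref{eq:dual_problem} is finite, which holds on $\cX_{\mathrm f}$, it is attained at a vertex. This gives $\rho(\vect{x})=\min_i\vect{\lambda}^{(i)\top}\rfun(\vect{x})$ on $\cX_{\mathrm f}$, a finite minimum of affine functions, and hence concave and piecewise affine.
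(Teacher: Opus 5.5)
Your proposal is correct and follows the paper's proof. Both derive the dual from the Lagrangian (your explicit multiplier $\mu$ for $\delta\geq 0$ is equivalent to the paper taking the supremum over $\delta\geq0$), apply LP strong duality to the feasible and bounded primal, and use weak duality for the two-sided bounds; your extra checks correctly fill in what the paper compresses into ``follows from the polyhedral parametric-LP representation'': convexity of $\mathcal{X}_{\mathrm f}$ via the $\vect{x}$-independence of $\mathcal{D}$, and the finite-vertex representation of $\rho$ using that $\mathcal{D}\subseteq\mathbb{R}^{n_c}_{\geq0}$ is pointed.
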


\begin{proof}
The Lagrangian of \eqref{eq:reserveLP} is $\delta+\vect{\lambda}^{\top}\bigl(\rfun(\vect{x})-G\vect{z}-\vect{q}\delta\bigr)$. Its supremum over the free vector $\vect{z}$ and $\delta\geq0$ is finite exactly when $\vect{\lambda}\in\cD$. Since \eqref{eq:reserveLP} is feasible and bounded on $\cX_{\mathrm f}$, strong duality yields \eqref{eq:dual_problem}, while \eqref{eq:generic_bounds} follows from weak duality. The final claim follows from the polyhedral parametric-LP representation in \eqref{eq:dual_problem}.
\end{proof}

\par For any disturbance-induced state displacement $\vect{d}\in\R^{n_x}$ such that $\vect{x}+\vect{d}\in\cX_{\mathrm f}$,
\begin{equation} 
\rho(\vect{x}+\vect{d}) \leq(\vect{\lambda}^{\star})^{\top}\rfun(\vect{x}+\vect{d}) =\rho(\vect{x})+(\vect{\lambda}^{\star})^{\top}H\vect{d}. \label{eq:dual_supergradient} 
\end{equation} 
Indeed, $\cD$ is state-independent, so the optimal multiplier $\vect{\lambda}^{\star}$ remains dual-feasible at $\vect{x}+\vect{d}$ and provides an immediate upper bound, whereas a certified lower bound requires a primal-feasible recourse. This asymmetry motivates the primal and dual correction hierarchies developed next.

\section{Primal--Dual Reserve Certificates}
\par At time $k$, let $(\vect{z}_{k}^{\star},\rho_k)$ and $\vect{\lambda}_{k}^{\star}$ be primal and dual optimizers of \eqref{eq:reserveLP} at $\vect{x}_k\in\cX_\rho$, with $\rho_k\coloneqq\rho(\vect{x}_k)$. Upon applying the first control input encoded by $\vect{z}_{k}^{\star}$, decompose the realized disturbance as
\begin{equation}
\vect{w}_{k}=\vect{w}_{k}^{0}+\vect{e}_{k},\qquad
\vect{w}_{k}^{0}\coloneqq\Pi_{\rho_k\cW_0}(\vect{w}_k),
\label{eq:dist_decomp}
\end{equation}
where $\vect{e}_{k}\coloneqq\vect{w}_{k}-\vect{w}_{k}^{0}$ is the excess perturbation relative to the certified set $\rho_k\cW_0$. Define the modeled and realized (excess-perturbed) successors, indexed by $0$ and $e$, as
\begin{align}
\vect{x}_{k+1}^{0}
&=A\vect{x}_{k}+B\vect{u}_{k}+E\vect{w}_{k}^{0},
\label{eq:x0}\\
\vect{x}_{k+1}^{e}
&=\vect{x}_{k+1}^{0}+\vect{d}_{k},\qquad
\vect{d}_{k}\coloneqq E\vect{e}_{k}.
\label{eq:xe}
\end{align}

\par By Assumption~\ref{ass:shift}, the shifted plan $\vect{z}_{k+1}^{0}\coloneqq\mathsf{S}(\vect{z}_{k}^{\star},\vect{w}_{k}^{0})$ is primal-feasible at $\vect{x}_{k+1}^{0}$ with reserve level $\rho_k$. Its constraint slack is
\begin{equation}
\vect{s}_{k+1}^{0}
\coloneqq
\rfun(\vect{x}_{k+1}^{0})
-G\vect{z}_{k+1}^{0}
-\vect{q}\rho_k
\geq\vect{0}.
\label{eq:slack}
\end{equation}
Then, for $\sigma\in\{0,e\}$, set $\vect{d}_{k}^{0}\coloneqq\vect{0}$ and $\vect{d}_{k}^{e}\coloneqq\vect{d}_{k}$, and define
\begin{equation}
\vect{x}_{k+1}^{\sigma}\coloneqq\vect{x}_{k+1}^{0}+\vect{d}_{k}^{\sigma},\quad
\bar{\vect{z}}_{k+1}^{\sigma}\coloneqq\vect{z}_{k+1}^{0}+L\vect{d}_{k}^{\sigma}.
\label{eq:successor_notation}
\end{equation}
Here, $L\in\R^{n_z\times n_x}$ is a fixed linear prediction map generating a baseline feedforward update, and $\bar{\vect{z}}_{k+1}^{\sigma}$ is the resulting uncorrected primal anchor. In what follows, $\underline{\rho}$ and $\overline{\rho}$ denote primal lower and dual upper reserve certificates, respectively.
\subsection{Primal correction hierarchy}

\par To construct the default primal hierarchy, let $P_u\in\R^{n_z\times Nn_u}$ embed stacked nominal-input corrections into $\vect{z}$, and select $K_r$ such that $A_r\coloneqq A+BK_r$ is Schur. This yields
\begin{equation}
\mathcal K_N
\coloneqq
\operatorname{col}(K_r,K_rA_r,\ldots,K_rA_r^{N-1}),
\,\,\,
L\coloneqq P_u\mathcal K_N.
\label{eq:Lcl}
\end{equation}
Let $T_m$ span the level-$m$ primal correction subspace. For $m=0,\ldots,N$, define 
\begin{equation}
J_m\coloneqq
\begin{bmatrix}
I_{mn_u}\\
0_{(N-m)n_u\times mn_u}
\end{bmatrix},
\quad
T_m\coloneqq P_uJ_m,
\label{eq:Tm}
\end{equation}
with $\range(T_0)=\{\vect{0}\}$, i.e., $m=0$ admits no correction. This prefix construction \eqref{eq:Tm} progressively frees corrections over the first $m$ inputs. Finally, append the full-space endpoint $T_{M_{\mathrm p}}\coloneqq I_{n_z}$ with $M_{\mathrm p}=N+1$. Thus,
\begin{equation}
\range(T_0)\subseteq\cdots\subseteq
\range(T_{M_{\mathrm p}})=\R^{n_z}.
\label{eq:primal_nesting}
\end{equation}
Note that while the choices of $L$ and $T_m$ affect reduced-level performance, the analysis below requires only the nesting and full-space endpoint in \eqref{eq:primal_nesting}.

\par Restricting the correction to $\range(T_m)$ at level $m$, define the signed reserve-adjustment LP
\begin{subequations}
\label{eq:signed_primal_LP}
\begin{align}
\beta_{k,m}^{\sigma\star}
\coloneqq
\min_{\vect{\eta},\beta}
&\quad \beta
\label{eq:signed_primal_obj}\\
\mathrm{s.t.}\quad
&GT_m\vect{\eta}-\vect{q}\beta
\leq
\vect{s}_{k+1}^{0}
+(H-GL)\vect{d}_{k}^{\sigma},
\label{eq:signed_primal_con}\\
&\beta\leq\rho_k,
\label{eq:signed_primal_delta}
\end{align}
\end{subequations}
where $\beta$ is the signed reserve adjustment, $\vect{\eta}$ parameterizes the correction in $\range(T_m)$, and \eqref{eq:signed_primal_delta} guarantees reserve nonnegativity. When feasible, an optimizer $\vect{\eta}_{k,m}^{\sigma\star}$ defines
\begin{equation}
\underline{\rho}_{k+1,m}^{\sigma}
\coloneqq
\rho_k-\beta_{k,m}^{\sigma\star},
\quad
\vect{z}_{k+1,m}^{\sigma}
\coloneqq
\bar{\vect{z}}_{k+1}^{\sigma}
+T_m\vect{\eta}_{k,m}^{\sigma\star}.
\label{eq:primal_lower_def}
\end{equation}
Thus, the repaired plan $\vect{z}_{k+1,m}^{\sigma}$ combines the shifted plan $\vect{z}_{k+1}^{0}$, baseline update $L\vect{d}_{k}^{\sigma}$, and optimized correction $T_m\vect{\eta}_{k,m}^{\sigma\star}$. Here, $\beta_{k,m}^{\sigma\star}>0$ indicates the minimum required reserve reduction, whereas $\beta_{k,m}^{\sigma\star}<0$ certifies a reserve gain. If a reduced level is infeasible, we set $\underline\rho_{k+1,m}^{\sigma}=0$ as a fallback and expand the primal space.

\begin{proposition}[Signed primal hierarchy]
\label{prop:primal_hierarchy}
Let $\vect{x}_{k+1}^{\sigma}\in\cX_{\mathrm f}$ and $\sigma\in\{0,e\}$. For $m=0,\ldots,M_{\mathrm p}-1$, the primal hierarchy satisfies
\begin{subequations}
\label{eq:primal_prop}
\begin{align}
0 \leq \underline{\rho}_{k+1,m}^{\sigma} &\leq \rho(\vect{x}_{k+1}^{\sigma}), &\quad&\text{(Validity)} \label{eq:primal_validity}\\
\underline{\rho}_{k+1,m}^{\sigma} &\leq \underline{\rho}_{k+1,m+1}^{\sigma}, &\quad&\text{(Monotonicity)} \label{eq:primal_monotonicity}\\
\underline{\rho}_{k+1,M_{\mathrm p}}^{\sigma} &= \rho(\vect{x}_{k+1}^{\sigma}), &\quad&\text{(Finite exactness)} \label{eq:primal_exactness}
\end{align}
\end{subequations}
and whenever \eqref{eq:signed_primal_LP} is feasible, $(\vect{z}_{k+1,m}^{\sigma},\underline{\rho}_{k+1,m}^{\sigma})$ is primal-feasible for \eqref{eq:reserveLP} at $\vect{x}_{k+1}^{\sigma}$.
\end{proposition}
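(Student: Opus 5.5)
The plan is to reduce everything to a change of variables linking the signed LP \eqref{eq:signed_primal_LP} to the reserve LP \eqref{eq:reserveLP} at $\vect{x}_{k+1}^{\sigma}$. Set $\vect{z}=\bar{\vect{z}}_{k+1}^{\sigma}+T_m\vect{\eta}$ and $\delta=\rho_k-\beta$. By \eqref{eq:slack} and \eqref{eq:successor_notation},
\begin{equation*}
\rfun(\vect{x}_{k+1}^{\sigma})-G\bar{\vect{z}}_{k+1}^{\sigma}-\vect{q}\rho_k
=\vect{s}_{k+1}^{0}+(H-GL)\vect{d}_{k}^{\sigma},
\end{equation*}
since $\rfun$ is affine in $\vect{x}$ with linear part $H$. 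Hence \eqref{eq:signed_primal_con} is equivalent to $G\vect{z}+\vect{q}\delta\leq\rfun(\vect{x}_{k+1}^{\sigma})$, and \eqref{eq:signed_primal_delta} is equivalent to $\delta\geq0$. Therefore \eqref{eq:signed_primal_LP} at level $m$ equals $\rho_k$ minus the value of \eqref{eq:reserveLP} at $\vect{x}_{k+1}^{\sigma}$, with $\vect{z}$ restricted to the affine set $\bar{\vect{z}}_{k+1}^{\sigma}+\range(T_m)$. The final feasibility claim follows at once: an optimizer $(\vect{\eta}^{\star},\beta^{\star})$ maps to a pair $(\vect{z}_{k+1,m}^{\sigma},\underline{\rho}_{k+1,m}^{\sigma})$ that satisfies \eqref{eq:reserveLP_con} and \eqref{eq:reserveLP_nonneg}.

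\textbf{Validity.} When \eqref{eq:signed_primal_LP} is feasible, the mapped pair is primal-feasible. Lemma~\ref{lem:duality}, via \eqref{eq:generic_bounds}, then gives $\underline{\rho}_{k+1,m}^{\sigma}\leq\rho(\vect{x}_{k+1}^{\sigma})$, and $\delta\geq0$ gives the lower bound $0$. The minimum in \eqref{eq:signed_primal_LP} is attained: it is feasible, and $\beta$ is bounded below because $\delta\leq\rho(\vect{x}_{k+1}^{\sigma})<\infty$ by the assumption $\vect{x}_{k+1}^{\sigma}\in\cX_{\mathrm f}$. When \eqref{eq:signed_primal_LP} is infeasible, the fallback value $0$ is valid because $\rho\geq0$ on $\cX_{\mathrm f}$.

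\textbf{Monotonicity.} The nesting \eqref{eq:primal_nesting} gives $\range(T_m)\subseteq\range(T_{m+1})$, so every feasible $(\vect{\eta},\beta)$ at level $m$ yields some $\vect{\eta}'$ with $T_{m+1}\vect{\eta}'=T_m\vect{\eta}$. Consequently $\beta_{k,m+1}^{\sigma\star}\leq\beta_{k,m}^{\sigma\star}$. The case needing separate care is an infeasible level $m$, where $\underline{\rho}_{k+1,m}^{\sigma}=0$. The inequality then still holds because level $m+1$ either also yields the fallback $0$ or yields a value that is $\geq0$ by validity. Feasibility can only be gained, never lost, as $m$ increases.

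\textbf{Finite exactness.} This is the step needing the most care, and it is where the full-space endpoint matters. At $T_{M_{\mathrm p}}=I_{n_z}$, the affine set $\bar{\vect{z}}_{k+1}^{\sigma}+\R^{n_z}$ is all of $\R^{n_z}$, so the restricted LP coincides with \eqref{eq:reserveLP} at $\vect{x}_{k+1}^{\sigma}$. That problem is feasible and bounded because $\vect{x}_{k+1}^{\sigma}\in\cX_{\mathrm f}$. The top level is thus feasible, the fallback is never triggered there, and $\underline{\rho}_{k+1,M_{\mathrm p}}^{\sigma}=\rho_k-(\rho_k-\rho(\vect{x}_{k+1}^{\sigma}))=\rho(\vect{x}_{k+1}^{\sigma})$. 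The proof therefore hinges on checking the slack identity above carefully, in particular that the $GL\vect{d}_{k}^{\sigma}$ term comes exactly from the baseline update in $\bar{\vect{z}}_{k+1}^{\sigma}$.
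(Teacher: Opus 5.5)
Your proposal is correct and follows the paper's proof. You use the same change of variables $\vect{z}=\bar{\vect{z}}_{k+1}^{\sigma}+T_m\vect{\eta}$, $\delta=\rho_k-\beta$, which turns \eqref{eq:signed_primal_LP} into the reserve LP restricted to an affine slice, and from there validity, nesting-based monotonicity with the zero fallback, and exactness at $T_{M_{\mathrm p}}=I_{n_z}$ follow as in the paper. Your extra checks (attainment of the minimum via the lower bound on $\beta$ from $\vect{x}_{k+1}^{\sigma}\in\cX_{\mathrm f}$, and guaranteed feasibility of the top level) fill in details the paper leaves implicit.
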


\begin{proof}
By \eqref{eq:slack} and \eqref{eq:successor_notation}, \eqref{eq:signed_primal_con} rewrites as $G\vect{z}_{k+1,m}^{\sigma}+\vect{q}(\rho_k-\beta)\leq\rfun(\vect{x}_{k+1}^{\sigma})$, with \eqref{eq:signed_primal_delta} ensuring $\rho_k-\beta\ge 0$. Thus, any optimizer yields a primal-feasible pair for \eqref{eq:reserveLP}, establishing \eqref{eq:primal_validity} (the fallback $\underline\rho=0$ is valid as $\rho\ge0$). By \eqref{eq:primal_nesting}, a feasible level remains feasible and $\beta^{\star}$ cannot increase under expansion; if level $m$ is infeasible, its fallback is zero. Hence \eqref{eq:primal_monotonicity} holds. Finally, for $T_{M_{\mathrm p}}=I_{n_z}$, setting $\vect{z}=\bar{\vect{z}}_{k+1}^{\sigma}+\vect{\eta}$ and $\delta=\rho_k-\beta$ recovers \eqref{eq:reserveLP}, proving \eqref{eq:primal_exactness}.
\end{proof}

\begin{corollary}[Cumulative reserve guarantee]
\label{cor:cumulative}
Let $(\vect{z}_{0}^{c},\hat{\rho}_0)$ be primal-feasible at $\vect{x}_0$. Suppose that at each time $k$ the current certified plan is shifted under some $\vect{w}_{k}^{0}\in\hat{\rho}_k\cW_0$ and a feasible signed primal level returns $\hat{\beta}_{k}^{\star}$. Then, with $\hat{\rho}_{k+1}\coloneqq\hat{\rho}_{k}-\hat{\beta}_{k}^{\star}$, the repaired plan remains primal-feasible at $\vect{x}_{k+1}$ and
\begin{equation}
\rho(\vect{x}_t)
\geq
\hat{\rho}_t
=
\hat{\rho}_0
-
\sum\nolimits_{k=0}^{t-1}\hat{\beta}_{k}^{\star},
\qquad \forall t\geq1.
\label{eq:cumulative}
\end{equation}
\end{corollary}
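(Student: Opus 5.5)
The plan is induction on $t$. The invariant is that the certified pair $(\vect{z}_{t}^{c},\hat{\rho}_t)$ is primal-feasible for \eqref{eq:reserveLP} at $\vect{x}_t$. The corollary then follows from the left inequality in \eqref{eq:generic_bounds} of Lemma~\ref{lem:duality}, which gives $\hat{\rho}_t\leq\rho(\vect{x}_t)$. The closed-form expression for $\hat{\rho}_t$ is just the unrolled recursion $\hat{\rho}_{k+1}=\hat{\rho}_k-\hat{\beta}_k^{\star}$, obtained by telescoping. The base case $t=0$ is the hypothesis that $(\vect{z}_0^c,\hat{\rho}_0)$ is primal-feasible at $\vect{x}_0$.

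For the inductive step, assume $(\vect{z}_k^c,\hat{\rho}_k)$ is primal-feasible at $\vect{x}_k$, and the shift uses some $\vect{w}_k^0\in\hat{\rho}_k\cW_0$.

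\begin{enumerate}[leftmargin=*,itemsep=1pt,topsep=2pt]
\item Invoke Assumption~\ref{ass:shift} with the feasible pair $(\vect{z}_k^c,\hat{\rho}_k)$. It applies to \emph{any} feasible pair, not only optimizers. This gives a shifted plan $\vect{z}_{k+1}^0$ satisfying \eqref{eq:shift_feasible} with level $\hat{\rho}_k$ at the modeled successor.
\item Define the slack $\vect{s}_{k+1}^0$ exactly as in \eqref{eq:slack}, but with $\rho_k$ replaced by $\hat{\rho}_k$. It is nonnegative by the previous step.
\item Note that the proof of Proposition~\ref{prop:primal_hierarchy} never used optimality of $\rho_k$. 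It only used the slack identity and the cap \eqref{eq:signed_primal_delta}, here read as $\beta\leq\hat{\rho}_k$. So the same algebra applies: any feasible solution $(\vect{\eta},\hat{\beta}_k^{\star})$ of the signed LP at the realized displacement $\vect{d}_k^\sigma$ (where $\vect{x}_{k+1}=\vect{x}_{k+1}^0+\vect{d}_k^\sigma$) rewrites as
\[
G\bigl(\bar{\vect{z}}_{k+1}^\sigma+T_m\vect{\eta}\bigr)+\vect{q}\bigl(\hat{\rho}_k-\hat{\beta}_k^{\star}\bigr)\leq\rfun(\vect{x}_{k+1}).
\]
Together with $\hat{\rho}_k-\hat{\beta}_k^{\star}\ge0$, this shows that the repaired plan $\vect{z}_{k+1}^c$ and $\hat{\rho}_{k+1}$ form a primal-feasible pair at $\vect{x}_{k+1}$. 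That closes the induction.
\end{enumerate}

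The main obstacle is making precise that Proposition~\ref{prop:primal_hierarchy} and its notation transfer when the anchor is a merely feasible certified pair rather than the optimizer $(\vect{z}_k^\star,\rho_k)$. I will do this by restating the rewriting step explicitly with $\hat{\rho}_k$ in place of $\rho_k$, rather than citing the proposition verbatim.

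A related subtlety is the decomposition of the disturbance. The realized disturbance must be split relative to $\hat{\rho}_k\cW_0$, which is what the hypothesis ``shifted under some $\vect{w}_k^0\in\hat{\rho}_k\cW_0$'' supplies, with $\vect{d}_k=E(\vect{w}_k-\vect{w}_k^0)$. Validity only requires $\vect{x}_{k+1}$ to equal the modeled successor plus the displacement used in the LP. No membership $\vect{x}_{k+1}\in\cX_{\mathrm f}$ needs to be assumed, because feasibility of the repaired pair itself shows \eqref{eq:reserveLP} is feasible there.

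One point remains to check. Applying \eqref{eq:generic_bounds} requires boundedness, that is, $\vect{x}_{k+1}\in\cX_{\mathrm f}$. I will either argue boundedness from compactness of the constraint sets, or read $\rho(\vect{x}_t)\ge\hat{\rho}_t$ directly from the definition of $\rho$ as a supremum, which needs only feasibility.
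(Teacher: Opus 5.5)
Your proposal is correct and is essentially the paper's argument: the paper proves the corollary by induction via ``repeated application of Proposition~\ref{prop:primal_hierarchy},'' and your invariant (primal feasibility of $(\vect{z}_t^c,\hat{\rho}_t)$ at $\vect{x}_t$, followed by weak duality) is that same induction. You add useful care by redoing the slack rewriting with the merely feasible anchor $\hat{\rho}_k$ in both \eqref{eq:slack} and the cap \eqref{eq:signed_primal_delta}, a case the proposition as stated (anchored at the optimizer $\rho_k$) does not literally cover. For your last point, the supremum reading is the cleaner choice: feasibility of the repaired pair alone gives $\rho(\vect{x}_t)\geq\hat{\rho}_t$ (trivially if the LP is unbounded), so no boundedness or $\cX_{\mathrm f}$-membership argument is needed.
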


\begin{proof}
Repeated application of Proposition~\ref{prop:primal_hierarchy} proves the claim by induction.
\end{proof}
\subsection{Dual correction hierarchy}

\par To construct the dual hierarchy, use the current optimal multiplier $\vect{\lambda}_{k}^{\star}\in\cD$ as the dual anchor. Since $G^\top\vect{\lambda}_{k}^{\star}=\vect{0}$, restricting its correction to $\ker(G^\top)$ preserves the dual equality automatically. Let $n_\lambda\coloneqq\dim\ker(G^\top)$, and for $j=0,\ldots,M_{\mathrm d}$ let $V_j\in\R^{n_c\times p_j}$ be a basis matrix for the level-$j$ dual correction subspace, with $G^\top V_j=0$. Choose these spaces such that
\begin{equation}
\range(V_0)\subseteq\cdots\subseteq
\range(V_{M_{\mathrm d}})=\ker(G^\top).
\label{eq:dual_nesting}
\end{equation}
With $\range(V_0)\coloneqq\{\vect{0}\}$, level $j=0$ admits no correction, while increasing $j$ progressively releases additional equality-preserving directions until the full nullspace is recovered. As in the primal hierarchy, the choice of $V_j$ affects reduced-level performance, while the analysis below requires only the nesting and full-nullspace endpoint in
\eqref{eq:dual_nesting}.

\par Restricting the correction to $\range(V_j)$ at level $j$,
define the reduced dual LP
\begin{subequations}
\label{eq:reduced_dual_LP}
\begin{align}
\overline{\rho}_{k+1,j}^{\sigma}
\coloneqq
\min_{\vect{\xi}}&\quad
(\vect{\lambda}_{k}^{\star}+V_j\vect{\xi})^\top
\rfun(\vect{x}_{k+1}^{\sigma})
\label{eq:reduced_dual_obj}\\
\mathrm{s.t.}\quad&
\vect{\lambda}_{k}^{\star}+V_j\vect{\xi}\geq\vect{0},\quad
\vect{q}^\top(\vect{\lambda}_{k}^{\star}+V_j\vect{\xi})\geq1.
\label{eq:reduced_dual_cons}
\end{align}
\end{subequations}
Here, $\vect{\xi}\in\R^{p_j}$ parameterizes the equality-preserving correction in $\range(V_j)$; the constraints retain dual feasibility, while the objective tightens the upper reserve certificate. Since $\vect{\lambda}_{k}^{\star}\in\cD$, $\vect{\xi}=\vect{0}$ is feasible at every level, so the hierarchy always provides an upper certificate, and $j=0$ recovers the original recycled-dual bound $\overline{\rho}_{k+1,0}^{\sigma}
=
(\vect{\lambda}_{k}^{\star})^\top
\rfun(\vect{x}_{k+1}^{\sigma}).$

\begin{proposition}[Reduced dual hierarchy]
\label{prop:dual_hierarchy}
Let $\vect{x}_{k+1}^{\sigma}\in\cX_{\mathrm f}$ and $\sigma\in\{0,e\}$. For $j=0,\ldots,M_{\mathrm d}-1$, the dual hierarchy satisfies
\begin{subequations}
\label{eq:dual_prop}
\begin{align}
\rho(\vect{x}_{k+1}^{\sigma})
&\leq
\overline{\rho}_{k+1,j}^{\sigma},
&\quad&\text{(Validity)}
\label{eq:dual_validity}\\
\overline{\rho}_{k+1,j+1}^{\sigma}
&\leq
\overline{\rho}_{k+1,j}^{\sigma},
&\quad&\text{(Monotonicity)}
\label{eq:dual_monotonicity}\\
\overline{\rho}_{k+1,M_{\mathrm d}}^{\sigma}
&=
\rho(\vect{x}_{k+1}^{\sigma}),
&\quad&\text{(Finite exactness)}
\label{eq:dual_exactness}
\end{align}
\end{subequations}
\end{proposition}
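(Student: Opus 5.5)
The plan mirrors the proof of Proposition~\ref{prop:primal_hierarchy}, now working on the dual side. Throughout, Lemma~\ref{lem:duality} is applied at $\vect{x}_{k+1}^{\sigma}\in\cX_{\mathrm f}$.

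\textbf{Validity.} First I note that every feasible $\vect{\xi}$ of \eqref{eq:reduced_dual_LP} produces a dual-feasible multiplier. Set $\vect{\lambda}=\vect{\lambda}_{k}^{\star}+V_j\vect{\xi}$. Since $G^\top\vect{\lambda}_{k}^{\star}=\vect{0}$ and $G^\top V_j=0$, we get $G^\top\vect{\lambda}=\vect{0}$. The constraints \eqref{eq:reduced_dual_cons} supply $\vect{\lambda}\ge\vect{0}$ and $\vect{q}^\top\vect{\lambda}\ge1$. Hence $\vect{\lambda}\in\cD$.

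By \eqref{eq:generic_bounds}, each such objective value is therefore at least $\rho(\vect{x}_{k+1}^{\sigma})$. The same then holds for the infimum $\overline{\rho}_{k+1,j}^{\sigma}$. This needs no attainment, because the infimum of values that are all bounded below by $\rho$ is still bounded below by $\rho$. It also shows that the reduced LP is bounded. Feasibility is already known, since $\vect{\xi}=\vect{0}$ is feasible. A bounded, feasible LP attains its minimum, so the min in \eqref{eq:reduced_dual_obj} is well defined. This gives \eqref{eq:dual_validity}.

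\textbf{Monotonicity.} The argument is by nesting. Because $\range(V_j)\subseteq\range(V_{j+1})$, any feasible $\vect{\xi}$ at level $j$ can be matched by some $\vect{\xi}'$ with $V_{j+1}\vect{\xi}'=V_j\vect{\xi}$. This $\vect{\xi}'$ is feasible at level $j+1$ and yields the identical multiplier, hence the identical objective value. The feasible multiplier set at level $j+1$ thus contains the one at level $j$. Minimizing over a larger set can only decrease the minimum, which gives \eqref{eq:dual_monotonicity}.

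\textbf{Finite exactness.} This is the step needing the most care. At level $M_{\mathrm d}$, \eqref{eq:dual_nesting} gives $\range(V_{M_{\mathrm d}})=\ker(G^\top)$. I will show that the reduced feasible multiplier set is then all of $\cD$. One inclusion was shown under validity. For the reverse inclusion, take any $\vect{\lambda}\in\cD$. Then $G^\top(\vect{\lambda}-\vect{\lambda}_{k}^{\star})=\vect{0}$, so $\vect{\lambda}-\vect{\lambda}_{k}^{\star}\in\ker(G^\top)=\range(V_{M_{\mathrm d}})$. Hence $\vect{\lambda}-\vect{\lambda}_{k}^{\star}=V_{M_{\mathrm d}}\vect{\xi}$ for some $\vect{\xi}$. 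This $\vect{\xi}$ satisfies \eqref{eq:reduced_dual_cons}, because $\vect{\lambda}\in\cD$. The reduced problem at level $M_{\mathrm d}$ therefore coincides with $\min_{\vect{\lambda}\in\cD}\vect{\lambda}^\top\rfun(\vect{x}_{k+1}^{\sigma})$. By \eqref{eq:dual_problem} (strong duality on $\cX_{\mathrm f}$), this minimum equals $\rho(\vect{x}_{k+1}^{\sigma})$, which is \eqref{eq:dual_exactness}.

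The only subtlety is that the anchor $\vect{\lambda}_{k}^{\star}$ comes from the state $\vect{x}_k$ rather than $\vect{x}_{k+1}^{\sigma}$. This is harmless because $\cD$ does not depend on the state, so the anchor remains dual-feasible at $\vect{x}_{k+1}^{\sigma}$.
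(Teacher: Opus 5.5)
Your proposal is correct and follows essentially the same route as the paper. Validity comes from every reduced multiplier lying in $\cD$ together with Lemma~\ref{lem:duality}, monotonicity from the nesting of $\range(V_j)$, and exactness from $\vect{\lambda}-\vect{\lambda}_{k}^{\star}\in\ker(G^\top)=\range(V_{M_{\mathrm d}})$; your added remarks on attainment of the reduced minimum and on the state-independence of $\cD$ are correct details the paper leaves implicit.
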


\begin{proof}
Every multiplier feasible for \eqref{eq:reduced_dual_LP} belongs to $\cD$, so Lemma~\ref{lem:duality} gives \eqref{eq:dual_validity}. The nesting in \eqref{eq:dual_nesting} enlarges the reduced dual feasible family as $j$ increases, so the minimum cannot increase, proving \eqref{eq:dual_monotonicity}. At the terminal level, any $\vect{\lambda}\in\cD$ satisfies $\vect{\lambda}-\vect{\lambda}_{k}^{\star}\in\ker(G^\top)=\range(V_{M_{\mathrm d}})$. Thus \eqref{eq:reduced_dual_LP} recovers the full dual problem \eqref{eq:dual_problem}, proving \eqref{eq:dual_exactness}.
\end{proof}

\begin{remark}[Dual-space construction]
\label{rem:dual_construction}
Any nested family satisfying \eqref{eq:dual_nesting} is admissible. A deterministic hierarchy can be obtained by ordering a QR/SVD basis of $\ker(G^\top)$. To improve contraction at small $p_j$, representative offline multiplier-difference directions may instead be prioritized by observed upper-gap reduction and completed to the full nullspace.
\end{remark}

\par Under the default prefix construction, primal level $m=0,\ldots,N$ uses $mn_u+1$ decision variables, while dual level $j$ uses $p_j$ variables; the two hierarchies can be expanded independently. The terminal levels recover the full primal and dual reserve LPs and are not claimed to be cheaper than re-optimization; rather, they provide a guaranteed exact fallback, while computational savings arise when the desired certificate tolerance is reached at reduced levels.
\section{Anytime Reserve-Depletion Certification}

\par Using the statewise certificates above, define the signed reserve depletion induced by the excess perturbation as
\begin{equation}
\Delta_k
\coloneqq
\rho(\vect{x}_{k+1}^{0})
-
\rho(\vect{x}_{k+1}^{e}).
\label{eq:depletion}
\end{equation}
Positive $\Delta_k$ denotes reserve depletion, whereas negative $\Delta_k$ denotes reserve gain. Let $m_\sigma\in\{0,\ldots,M_{\mathrm p}\}$ and $j_\sigma\in\{0,\ldots,M_{\mathrm d}\}$ be independently chosen primal and dual levels for $\sigma\in\{0,e\}$. Define the online-computable statewise gap
\begin{equation}
\Gamma_{k+1}^{\sigma}(m_\sigma,j_\sigma)
\coloneqq
\overline{\rho}_{k+1,j_\sigma}^{\sigma}
-
\underline{\rho}_{k+1,m_\sigma}^{\sigma}.
\label{eq:state_gap}
\end{equation}

\begin{theorem}[Two-sided reserve-depletion envelope]
\label{thm:nested_certificate}
Suppose Assumption~\ref{ass:shift} holds and
$\vect{x}_{k+1}^{0},\vect{x}_{k+1}^{e}\in\cX_{\mathrm f}$.
Then, for each $\sigma\in\{0,e\}$,
\begin{equation}
\underline{\rho}_{k+1,m_\sigma}^{\sigma}
\leq
\rho(\vect{x}_{k+1}^{\sigma})
\leq
\overline{\rho}_{k+1,j_\sigma}^{\sigma}.
\label{eq:two_sided_state}
\end{equation}
Hence $\Gamma_{k+1}^{\sigma}(m_\sigma,j_\sigma)$ bounds either one-sided reserve error. Consequently,
\begin{equation}
\underline{\Delta}_k(\vect{m},\vect{j})
\leq
\Delta_k
\leq
\overline{\Delta}_k(\vect{m},\vect{j}),
\label{eq:depletion_envelope}
\end{equation}
where $\vect{m}=(m_0,m_e)$, $\vect{j}=(j_0,j_e)$, and
\begin{align}
\underline{\Delta}_k(\vect{m},\vect{j})
&\coloneqq
\underline{\rho}_{k+1,m_0}^{0}
-
\overline{\rho}_{k+1,j_e}^{e},
\label{eq:depletion_lower}\\
\overline{\Delta}_k(\vect{m},\vect{j})
&\coloneqq
\overline{\rho}_{k+1,j_0}^{0}
-
\underline{\rho}_{k+1,m_e}^{e}.
\label{eq:depletion_upper}
\end{align}
The envelope width satisfies
\begin{equation}
\begin{aligned}
W_k(\vect{m},\vect{j})
&\coloneqq
\overline{\Delta}_k-\underline{\Delta}_k\\
&=
\Gamma_{k+1}^{0}(m_0,j_0)
+
\Gamma_{k+1}^{e}(m_e,j_e),
\end{aligned}
\label{eq:width}
\end{equation}
and is nonincreasing in each of $m_0,m_e,j_0,j_e$. If, for each $\sigma\in\{0,e\}$, either $m_\sigma=M_{\mathrm p}$ or $j_\sigma=M_{\mathrm d}$, then the corresponding state reserve is exact. For scalar envelope evaluation, the companion bound may then be set to the same exact value, yielding
\begin{equation}
W_k=0,
\qquad
\underline{\Delta}_k
=
\overline{\Delta}_k
=
\Delta_k.
\label{eq:finite_exactness}
\end{equation}
Hence, any expansion sequence that advances a nonterminal index at each unresolved iteration remains valid after every solve and becomes exact in finitely many expansions.
\end{theorem}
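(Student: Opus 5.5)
The plan is to assemble the theorem almost entirely from Propositions~\ref{prop:primal_hierarchy} and~\ref{prop:dual_hierarchy}. Both are stated for levels up to $M_{\mathrm p}-1$ and $M_{\mathrm d}-1$, with the terminal levels handled by their exactness clauses. First, fix $\sigma\in\{0,e\}$. Since $\vect{x}_{k+1}^{\sigma}\in\cX_{\mathrm f}$ by hypothesis, \eqref{eq:primal_validity} gives $\underline{\rho}_{k+1,m_\sigma}^{\sigma}\leq\rho(\vect{x}_{k+1}^{\sigma})$ for every $m_\sigma<M_{\mathrm p}$, and \eqref{eq:primal_exactness} gives equality at $m_\sigma=M_{\mathrm p}$. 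Assumption~\ref{ass:shift} is used here only through Proposition~\ref{prop:primal_hierarchy}, which needs the slack $\vect{s}_{k+1}^{0}\geq\vect{0}$. Symmetrically, \eqref{eq:dual_validity} together with \eqref{eq:dual_exactness} gives the upper half of \eqref{eq:two_sided_state}. The claim that $\Gamma_{k+1}^{\sigma}$ bounds either one-sided error is then immediate: both $\rho(\vect{x}_{k+1}^{\sigma})-\underline{\rho}$ and $\overline{\rho}-\rho(\vect{x}_{k+1}^{\sigma})$ are nonnegative, and they sum to $\Gamma_{k+1}^{\sigma}$.

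Next, I will obtain \eqref{eq:depletion_envelope} by subtracting the interval bounds. Write $\Delta_k=\rho(\vect{x}_{k+1}^{0})-\rho(\vect{x}_{k+1}^{e})$. Taking the lower bound for the first term and the upper bound for the second gives $\underline{\Delta}_k$; taking the upper bound for the first and the lower bound for the second gives $\overline{\Delta}_k$. Subtracting \eqref{eq:depletion_lower} from \eqref{eq:depletion_upper} and regrouping yields \eqref{eq:width} algebraically. Monotonicity of $W_k$ follows because each $\Gamma^{\sigma}$ depends only on $(m_\sigma,j_\sigma)$. In $m_\sigma$ it is nonincreasing by \eqref{eq:primal_monotonicity}, and in $j_\sigma$ by \eqref{eq:dual_monotonicity}.

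One step needs care: the last primal step, from $m=M_{\mathrm p}-1$ to $M_{\mathrm p}$. Proposition~\ref{prop:primal_hierarchy} states monotonicity for $m=0,\ldots,M_{\mathrm p}-1$, which as indexed already covers the comparison with level $M_{\mathrm p}$. If it did not, the gap would still be closed by validity combined with exactness, since $\underline{\rho}_{m}\leq\rho=\underline{\rho}_{M_{\mathrm p}}$. The same argument handles the dual endpoint.

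For finite exactness, suppose that for each $\sigma$ either $m_\sigma=M_{\mathrm p}$ or $j_\sigma=M_{\mathrm d}$. Then one side of \eqref{eq:two_sided_state} equals $\rho(\vect{x}_{k+1}^{\sigma})$. Following the convention in the statement, I will replace the companion bound by this exact value; this keeps \eqref{eq:two_sided_state} valid and makes $\Gamma^{\sigma}=0$. Substituting into \eqref{eq:width} gives $W_k=0$, and \eqref{eq:depletion_envelope} then forces $\underline{\Delta}_k=\overline{\Delta}_k=\Delta_k$.

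The anytime claim has two parts. Validity after every solve holds because \eqref{eq:two_sided_state} holds for arbitrary level indices, including infeasible reduced primal levels via the fallback $\underline{\rho}=0$ in \eqref{eq:primal_validity}. Termination follows because the index vector $(m_0,m_e,j_0,j_e)$ lives in the finite set $\{0,\ldots,M_{\mathrm p}\}^2\times\{0,\ldots,M_{\mathrm d}\}^2$. Advancing a nonterminal index at each unresolved iteration strictly increases $m_0+m_e+j_0+j_e$, so after at most $2(M_{\mathrm p}+M_{\mathrm d})$ expansions the exactness condition is reached for both $\sigma$.

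I expect the main obstacle to be bookkeeping rather than depth. Specifically, "valid across active-set changes" must be justified by noting that neither proposition assumes a fixed basis: validity comes purely from primal feasibility and from $\cD$ being state-independent, as in Lemma~\ref{lem:duality}. Beyond that, the endpoint conventions for the monotonicity ranges must be handled as described above.
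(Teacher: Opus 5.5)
Your proposal is correct and follows the paper's proof: the statewise bounds come from Propositions~\ref{prop:primal_hierarchy} and~\ref{prop:dual_hierarchy}, the envelope and width come from pairing and subtracting those bounds, width monotonicity comes from the two hierarchies, and exactness comes from the terminal levels. Your index-sum counting argument for finite termination is a sound elaboration of a step the paper leaves implicit. One small correction: the rewriting in Proposition~\ref{prop:primal_hierarchy} is purely algebraic and does not use $\vect{s}_{k+1}^{0}\geq\vect{0}$; Assumption~\ref{ass:shift} is needed to define the shifted anchor $\vect{z}_{k+1}^{0}$ and to make level zero feasible at the modeled successor.
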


\begin{proof}
Equation~\eqref{eq:two_sided_state} follows from Propositions~\ref{prop:primal_hierarchy} and \ref{prop:dual_hierarchy}. Combining the modeled-state lower bound with the realized-state upper bound gives \eqref{eq:depletion_lower}, while reversing the bound choices gives \eqref{eq:depletion_upper}. Direct subtraction yields \eqref{eq:width}. Primal expansion raises a lower bound and dual expansion lowers an upper bound, so $W_k$ is nonincreasing. Finally, \eqref{eq:primal_exactness} and \eqref{eq:dual_exactness} give the exact endpoint values and hence \eqref{eq:finite_exactness}.
\end{proof}

\par Theorem~\ref{thm:nested_certificate} remains valid across active-set changes. If the updated basic solution is primal-feasible, the corresponding successor reserve is exact without further hierarchy solves.

\begin{corollary}[Basis-preserving exactness]
\label{cor:basis}
Consider a standard form representation of \eqref{eq:reserveLP}, and let $\cB_k$ be an optimal basis at $\vect{x}_k$ with associated dual vector $\vect{\lambda}_{\cB_k}$. For $\sigma\in\{0,e\}$, let $(\vect{z}_{\cB_k}^{\sigma},\delta_{\cB_k}^{\sigma})$ be the basic solution obtained after replacing the right-hand side by $\rfun(\vect{x}_{k+1}^{\sigma})$. If this basic solution is primal-feasible, then
\begin{equation}
\delta_{\cB_k}^{\sigma}
=
(\vect{\lambda}_{\cB_k})^\top
\rfun(\vect{x}_{k+1}^{\sigma})
=
\rho(\vect{x}_{k+1}^{\sigma}).
\label{eq:basis_exact}
\end{equation}
Hence, the corresponding state gap is zero without solving a reduced primal or dual LP.
\end{corollary}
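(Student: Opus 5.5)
The plan is to apply standard LP basis theory to a standard-form reformulation of \eqref{eq:reserveLP}, and then close the argument with the sandwich in Lemma~\ref{lem:duality}. First I would convert \eqref{eq:reserveLP} to standard form. Split the free vector $\vect{z}=\vect{z}^+-\vect{z}^-$ and add slacks $\vect{s}\geq\vect{0}$. This gives the problem of maximizing $\vect{c}^\top\vect{y}$ subject to $\tilde A\vect{y}=\rfun(\vect{x})$ and $\vect{y}\geq\vect{0}$, where $\vect{y}=(\vect{z}^+,\vect{z}^-,\delta,\vect{s})$ and $\tilde A=[G,\,-G,\,\vect{q},\,I]$. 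The parameter $\vect{x}$ enters only through the right-hand side $\rfun(\vect{x})$, and this is the key structural fact.

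The optimal basis $\cB_k$ at $\vect{x}_k$ fixes the basis matrix $\tilde A_{\cB_k}$. It also fixes the simplex multiplier $\vect{\lambda}_{\cB_k}^\top=\vect{c}_{\cB_k}^\top\tilde A_{\cB_k}^{-1}$. Optimality of $\cB_k$ means its reduced costs have the correct sign, and this condition is independent of the right-hand side. Equivalently, $\vect{\lambda}_{\cB_k}$ satisfies the dual constraints of the standard form. Translating back, these read $G^\top\vect{\lambda}=\vect{0}$ (from the $\pm G$ columns), $\vect{q}^\top\vect{\lambda}\geq1$ (from the $\delta$ column), and $\vect{\lambda}\geq\vect{0}$ (from the slack columns). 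Hence $\vect{\lambda}_{\cB_k}\in\cD$, and this set is state-independent.

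Next I would replace the right-hand side by $\rfun(\vect{x}_{k+1}^{\sigma})$. The basic solution is then $\vect{y}_{\cB_k}=\tilde A_{\cB_k}^{-1}\rfun(\vect{x}_{k+1}^{\sigma})$, with all nonbasic entries equal to zero. Its objective value is
\begin{equation*}
\delta_{\cB_k}^{\sigma}=\vect{c}_{\cB_k}^\top\tilde A_{\cB_k}^{-1}\rfun(\vect{x}_{k+1}^{\sigma})=(\vect{\lambda}_{\cB_k})^\top\rfun(\vect{x}_{k+1}^{\sigma}),
\end{equation*}
which gives the first equality in \eqref{eq:basis_exact} purely algebraically.

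If this basic solution is primal-feasible, i.e.\ $\vect{y}_{\cB_k}\geq\vect{0}$, then the recovered pair $(\vect{z}_{\cB_k}^{\sigma},\delta_{\cB_k}^{\sigma})$ is feasible for \eqref{eq:reserveLP} at $\vect{x}_{k+1}^{\sigma}$. Since $\vect{\lambda}_{\cB_k}\in\cD$, \eqref{eq:generic_bounds} yields
\begin{equation*}
\delta_{\cB_k}^{\sigma}\leq\rho(\vect{x}_{k+1}^{\sigma})\leq(\vect{\lambda}_{\cB_k})^\top\rfun(\vect{x}_{k+1}^{\sigma}).
\end{equation*}
The two ends of this chain coincide, so all three quantities are equal, and boundedness holds automatically.

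For the final claim, I would take these primal and dual values as the lower and upper certificates. Theorem~\ref{thm:nested_certificate} then permits exactly this, so $\Gamma_{k+1}^{\sigma}=0$.

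The main obstacle is bookkeeping rather than depth: one must verify carefully that optimality of $\cB_k$ really certifies $\vect{\lambda}_{\cB_k}\in\cD$ in the original inequality form. That requires tracking the sign conventions for the split variables, the $\delta\geq0$ column with its objective coefficient $1$, and the slack columns. The check I would do first is that reduced-cost nonpositivity on the $\vect{z}^{+}$ and $\vect{z}^{-}$ columns together forces $G^\top\vect{\lambda}_{\cB_k}=\vect{0}$.
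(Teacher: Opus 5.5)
Your proposal is correct and follows the paper's route. The paper's one-line proof rests on the same fact you use: dual feasibility of $\cB_k$ (equivalently $\vect{\lambda}_{\cB_k}\in\cD$) does not depend on the right-hand side, so primal feasibility of the updated basic solution forces optimality. You make explicit the steps the paper leaves implicit, namely the reduced-cost translation into $\cD$, the identity $\delta_{\cB_k}^{\sigma}=(\vect{\lambda}_{\cB_k})^\top\rfun(\vect{x}_{k+1}^{\sigma})$, and the weak-duality sandwich, which you use in place of citing the simplex optimality criterion.
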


\begin{proof}
Since only the right-hand side changes, the basis remains dual-feasible; primal feasibility of the updated basic solution therefore implies optimality.
\end{proof}

\begin{algorithm}[t]
\caption{Basis-First Anytime Reserve Certification}
\label{alg:anytime}
\small
\begin{algorithmic}[1]
\Require successor data, $\vect{\lambda}_k^\star$,
$\{T_m\}_{0}^{M_{\mathrm p}}$, $\{V_j\}_{0}^{M_{\mathrm d}}$,
tolerance $\varepsilon_{\mathrm{cert}}\geq0$, optional basis $\cB_k$
\Ensure envelope $[\underline{\Delta}_k,\overline{\Delta}_k]$
and feasible excess-state repair $\vect{z}_{k+1}^{e}$

\State $(\mathcal C) \gets\textsc{InitializeBounds}(\cB_k)$
\Comment{basis exact or level-zero certificates}
\State form $[\underline{\Delta}_k,\overline{\Delta}_k]$ and $W_k$
from \eqref{eq:depletion_lower}--\eqref{eq:width}

\While{$\neg\textsc{AnytimeStop}(\mathcal C,W_k,\varepsilon_{\mathrm{cert}})$}
\State $(\sigma^\star,a^\star)\gets\textsc{SelectExpansion}(\mathcal C)$,
\quad $a^\star\in\{\mathrm p,\mathrm d\}$
\State increment $m_{\sigma^\star}$ if $a^\star=\mathrm p$;
otherwise increment $j_{\sigma^\star}$
\State solve the selected reduced LP
\eqref{eq:signed_primal_LP} or \eqref{eq:reduced_dual_LP}
and update its statewise certificate
\State update $\vect{z}_{k+1}^{e}$ in $\mathcal C$ if a feasible excess-state repair is obtained
\State apply \eqref{eq:primal_exactness} or \eqref{eq:dual_exactness}
if $m_{\sigma^\star}=M_{\mathrm p}$ or $j_{\sigma^\star}=M_{\mathrm d}$
\State update $\Gamma_{k+1}^{\sigma^\star}$,
$[\underline{\Delta}_k,\overline{\Delta}_k]$, and $W_k$
\EndWhile

\State \Return $[\underline{\Delta}_k,\overline{\Delta}_k]$ and $\vect{z}_{k+1}^{e}$
\end{algorithmic}
\end{algorithm}

\par Theorem~\ref{thm:nested_certificate} leads to the basis-first anytime implementation in Algorithm~\ref{alg:anytime}, which balances two objectives: tightening the depletion envelope $W_k$ to the prescribed tolerance $\varepsilon_{\mathrm{cert}}$, and retaining an implementable primal repair $\vect{z}_{k+1}^{e}$ for the realized successor $\vect{x}_{k+1}^{e}$. Here, $\vect{z}_{k+1}^{e}$ denotes the latest feasible excess-state repair obtained either from a successful basis test or the $m_e$-level excess-state primal solve. Let $\mathcal C$ collect the current hierarchy levels, statewise bounds, exactness status, and the latest feasible repair $\vect{z}_{k+1}^{e}$. For each successor $\sigma\in\{0,e\}$, $\textsc{InitializeBounds}$ first applies the basis test in Corollary~\ref{cor:basis}. If the updated basic solution is primal-feasible, it yields the exact reserve without solving a hierarchy LP (and $\vect{z}_{k+1}^{e}$ is cached when $\sigma=e$); otherwise, the level-zero primal and recycled-dual certificates are computed. When hierarchy expansion is needed, $\textsc{SelectExpansion}$ outputs $(\sigma^\star,a^\star)$, where $a^\star\in\{\mathrm p,\mathrm d\}$ denotes primal or dual expansion. If a required repair $\vect{z}_{k+1}^{e}$ is unavailable, it prioritizes the excess-state primal hierarchy ($\sigma^\star=e, a^\star=\mathrm p$); thereafter, unresolved levels can be selected via an implementation-specific rule (e.g., by selecting the state with the larger gap $\Gamma_{k+1}^{\sigma}$) to tighten $W_k$. While this selection policy governs contraction speed, it leaves theoretical validity and finite exactness unaffected. The algorithm terminates via $\textsc{AnytimeStop}$ when $W_k\leq\varepsilon_{\mathrm{cert}}$ and a feasible repair $\vect{z}_{k+1}^{e}$ is secured—a necessary condition since exact dual bounds certify reserve values but do not yield a primal-feasible input for closed-loop execution. Consequently, Algorithm~\ref{alg:anytime} operates in an anytime manner: it maintains a valid depletion envelope after every completed hierarchy solve, supports early termination once the prescribed tolerance is met, and reaches the exact depletion in finitely many levels if it continues.
\section{Numerical Study}
We evaluate reserve geometry, finite-perturbation validity, and the computation--tightness tradeoff of Algorithm~\ref{alg:anytime}. Exact successor reserves are obtained by full re-optimization only for assessment and are not used by the online certificate.
\subsection{Double integrator: geometry and finite perturbations}
Consider a double integrator
\[
A=\begin{bmatrix}1&T_s\\0&1\end{bmatrix},\,
B=E=\begin{bmatrix}T_s^2/2\\T_s\end{bmatrix},\,
\cW_0=[-1,1].
\]
We use $T_s=0.1$~s, $N=15$, $|p|\leq4$~m, $|v|\leq2$~m/s, and $|u|\leq1$~m/s$^2$. The fixed ancillary feedback and the prediction map in \eqref{eq:Lcl} use the same discrete LQR gain $K_r=[-2.738\ -2.641]$; support-function tightenings and robust invariant terminal ingredients constructed under $A+BK_r$ ensure Assumption~\ref{ass:shift}. Figure~\ref{fig:double_integrator}(a) shows the exact reserve on a $151\times151$ position--velocity grid. The two equal-position-margin states have $\rho_A=8.90\times10^{-4}$ and $\rho_B=3.19\times10^{-2}$, differing by a factor of $35.8$.

Finite perturbations are generated as $w_k=\alpha\rho_k s$, with $s\in\{-1,1\}$ and $\alpha$ stratified over $[1.02,2]$, retaining only pairs satisfying the domain conditions of Theorem~\ref{thm:nested_certificate}. In the recycled-basis-failure case of Fig.~\ref{fig:double_integrator}(b), the local sensitivity prediction $\widehat\Delta_k^{\mathrm{sens}}:=-(\vect{\lambda}_k^\star)^\top H\vect{d}_k$ incurs an absolute error of $2.29\times10^{-3}$. The envelope contains the exact $\Delta_k$ after every completed solve and reaches the prescribed tolerance after $6$ reduced solves, before either exact endpoint. Over the recycled-basis-failure subset, the median normalized sensitivity error is $1.16\times10^{-3}$. Across $4000$ pairs, no bound-validity, monotonicity, endpoint-exactness, or repair-feasibility violation exceeded $10^{-7}$; And approximately $60.3\%$ of the feasible signed repairs satisfied $\beta_{k,m}^{\sigma\star}<0$.
\begin{figure}[H]
\centering
\includegraphics[width=\columnwidth]{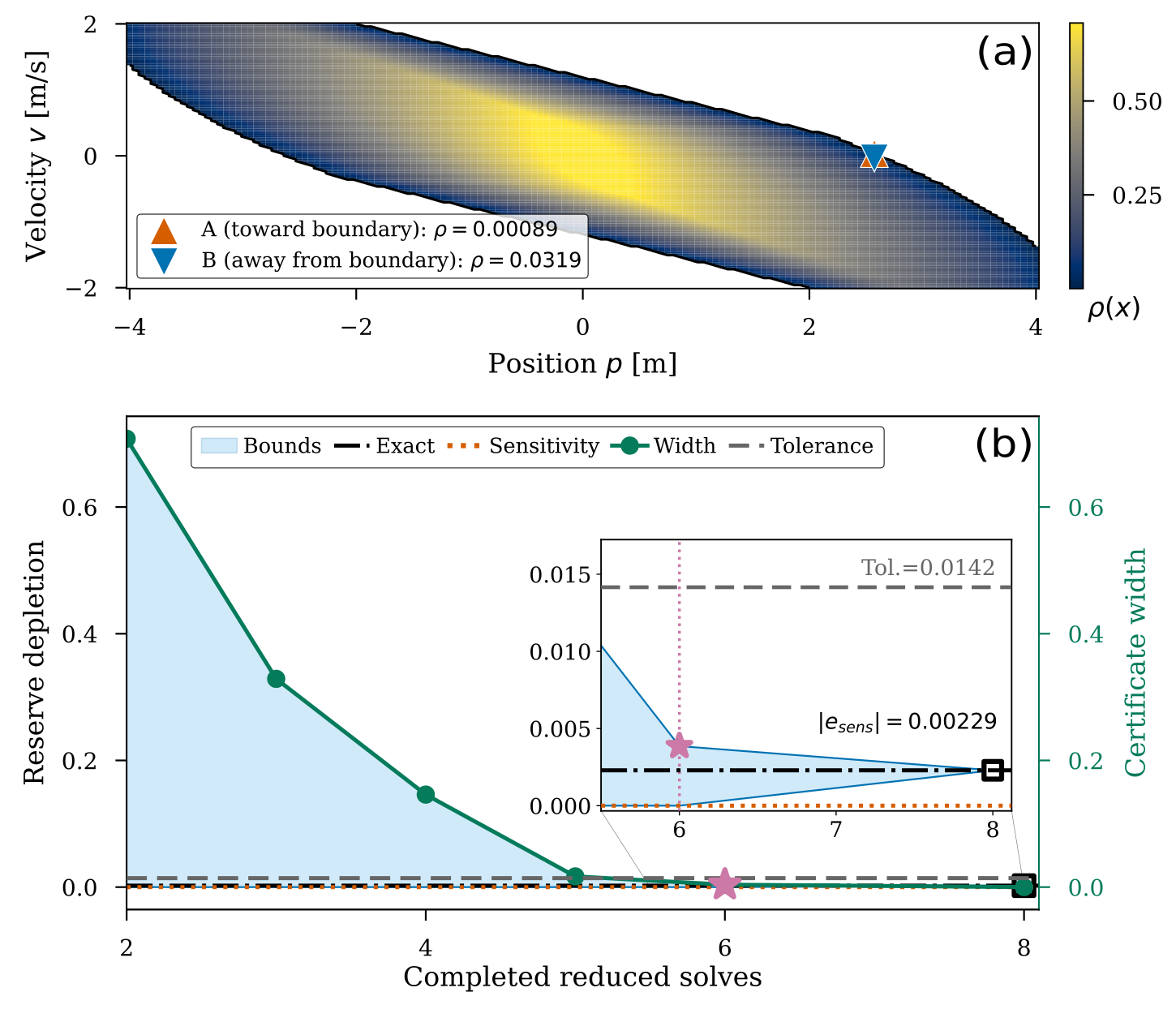}
\caption{Double-integrator results. (a) Equal geometric margins do not imply equal certified reserves. (b) Following a recycled-basis failure, the certified envelope contains the exact depletion and reaches the prescribed tolerance before either exact endpoint, whereas the sensitivity prediction is not exact.}
\label{fig:double_integrator}
\end{figure}
\subsection{Lateral-vehicle benchmark}
The medium-scale benchmark is a zero-order-hold discretized linear bicycle model with $\vect{x}=[e_y,e_\psi,v_y,r]^\top$, $u=\delta_f$, and $\vect{w}=[w_F,w_M]^\top\in\delta[-1,1]^2$. We use $V_x=15$~m/s, $T_s=0.05$~s, $N\in\{20,40,60\}$, and $(m,I_z,a,b,C_f,C_r)=(1575,2875,1.2,1.6,8\times10^4,8\times10^4)$ in SI units; $w_F$ and $w_M$ represent a $500$~N lateral force and a $300$~N\,m yaw moment. The constraints are $|e_y|\leq1$~m, $|e_\psi|\leq0.12$~rad, $|v_y|\leq1.5$~m/s, $|r|\leq0.4$~rad/s, and $|\delta_f|\leq0.18$~rad. The LQR gain is $K_r=[-0.5825\ -2.7888\ -0.0822\ -0.1529]$.

The primal prefix length and the number of retained directions in a deterministic QR basis of $\ker(G^\top)$ are increased geometrically, with terminal levels included explicitly. \footnote{Specifically, primal and dual dimensions follow $\{0,1,2,4,\dots,N\}$ and $\{0,4,8,16,\dots,n_\lambda=\dim\ker(G^\top)\}$, respectively. If the $N$-prefix is not full-dimensional in $\R^{n_z}$, $T_{M_{\mathrm p}}=I_{n_z}$ is appended.} If no excess-state repair exists, \textsc{SelectExpansion} expands $m_e$; otherwise, it selects the larger statewise gap and expands the side adding fewer variables, alternating ties. The stopping tolerance and normalized certificate width we used are $\varepsilon_{\mathrm{cert}}=\max\{10^{-5},0.02\rho_k\}, \widetilde W_k=\frac{W_k}{\max\{\rho_k,10^{-6}\}}$. For each horizon, the held-out set contains $3000$ natural pairs and $1000$ additional pairs for which at least one recycled-basis test fails; all retained successors lie in $\cX_{\mathrm f}$. We compare the level-zero certificate $\mathrm R_0$, primal-only $\mathrm P_{\varepsilon}$, control-compatible dual-only $\mathrm D_{\varepsilon}$, $\mathrm{PD}_{\varepsilon}^{-\mathrm B}$ without the basis test, the complete proposed $\mathrm{PD}_{\varepsilon}$ method, and two warm-started full successor solves. All methods use the same single-threaded HiGHS dual-simplex backend and $10^{-8}$ feasibility tolerances; timing includes projection, basis tests, right-hand-side updates, selection, and all solver calls, but excludes offline construction.

\begin{figure}[!t]
\centering
\includegraphics[width=\columnwidth]{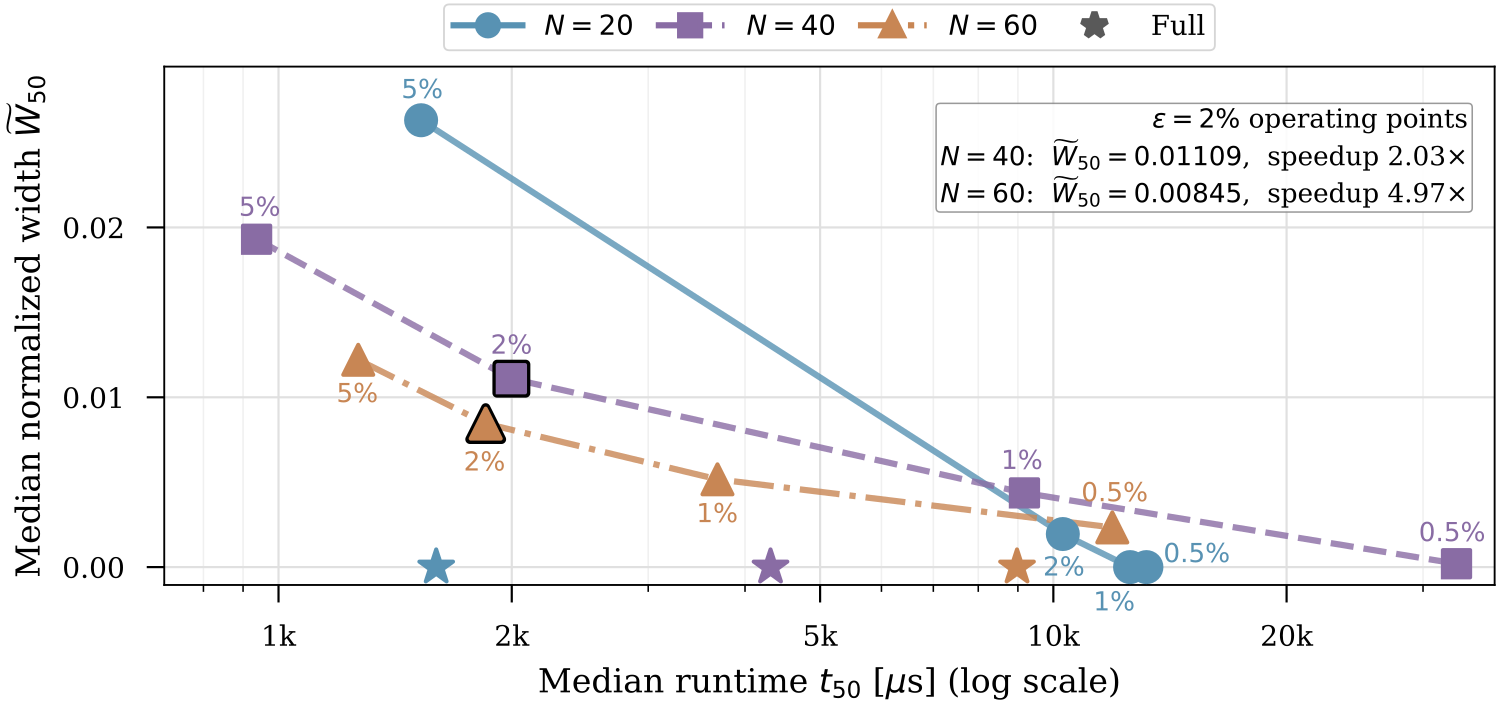}
\caption{Vehicle certificate tightness--computation tradeoff for $N\in\{20,40,60\}$; stars denote warm-started \textit{Full} as the zero-width endpoint. At $\varepsilon_{\mathrm{cert}}=2\%$, the highlighted $N=40$ and $N=60$ operating points attain median normalized widths of $0.01109$ and $0.00845$, with paired median speedups $\operatorname{median}_i(t_{\mathrm{Full},i}/t_{\mathrm{PD},i})$ of $2.03\times$ and $4.97\times$, respectively.}
\label{fig:vehicle_pareto}
\end{figure}

Figure~\ref{fig:vehicle_pareto} summarizes the computation--tightness tradeoff across horizons. At $\varepsilon_{\mathrm{cert}}=2\%$, $N=40$ and $N=60$ attain median normalized widths of $0.01109$ and $0.00845$, with paired median speedups of $2.03\times$ and $4.97\times$, respectively, showing that useful certified widths can be reached before the zero-width \textit{Full} endpoint and that the benefit grows with horizon. Table~\ref{tab:vehicle} reports the $N=40$ results. All methods have zero certificate violations; $\mathrm{PD}_{\varepsilon}$ stops before either exact endpoint in $92.3\%$ of cases and has the lowest $t_{50}$ and $t_{95}$ and the highest early-stop rate among the hierarchy-based methods. The ablations are complementary: $\mathrm P_{\varepsilon}$ is faster but looser, whereas $\mathrm D_{\varepsilon}$ is tighter but costlier under the present QR ordering, so their adaptive combination gives the best time-to-tolerance tradeoff. Warm-started \textit{Full} retains the lower $t_{95}$ because hard cases may require several reduced expansions. Problem-informed dual directions and a cost-aware switch to \textit{Full} can reduce this tail without changing validity, monotonicity, or finite exactness. The basis test gives exact reserves for $11.6\%$ of the modeled and $10.2\%$ of the realized successors. Finally, across $100$ domain-conditioned $40$-step closed-loop runs, only $5$ of $4005$ disturbance candidates were rejected and no trajectory restart was required. The cumulative bound in \eqref{eq:cumulative} held at all $4000$ accepted states within $10^{-7}$, with a maximum numerical shortfall $1.83\times10^{-13}$; moreover, $38.1\%$ of repairs satisfied $\hat{\beta}_k^\star<0$, demonstrating the capture of both reserve depletion and gain.

\begin{table}[!t]
\caption{End-to-end vehicle certificate performance}
\label{tab:vehicle}
\centering
\setlength{\tabcolsep}{7pt}
\begin{threeparttable}
\begin{tabular}{@{}lccccc@{}}
\toprule
Method & Viol. & $\widetilde W_{50}$ & $t_{50}$ [$\mu$s] & $t_{95}$ [$\mu$s] & Early [\%]\\
\midrule
$\mathrm R_0$ & 0 & 0.0321 & 878.3 & 927.5 & -- \\
$\mathrm P_{\varepsilon}$ & 0 & 0.0139 & 2219.8 & 15911.7 & 88.1 \\
$\mathrm D_{\varepsilon}$ & 0 & 0.0099 & 9153.8 & 23593.3 & 75.6 \\
$\mathrm{PD}_{\varepsilon}^{-\mathrm B}$ & 0 & 0.0117 & 2273.0 & 14200.3 & 89.9 \\
$\mathrm{PD}_{\varepsilon} (\text{our})$ & 0 & 0.0111 & 2003.5 & 10500.4 & 92.3 \\
Full & 0 & 0 & 4315.7 & 5435.8 & 0 \\
\bottomrule
\end{tabular}
    \begin{tablenotes}
      \item \scriptsize Evaluated with $N=40$, $\varepsilon_{\mathrm{cert}}=2\%$, and violation threshold $10^{-7}$. Early: stopping before either exact endpoint.
    \end{tablenotes}
\end{threeparttable}
 \vspace{-15pt}
\end{table}
\section{Conclusion}
This letter recasts post-disturbance reserve assessment as an interruptible certification task, replacing immediate full re-optimization with a tunable depletion envelope, an implementable successor repair, and an exact fallback. It is well suited to real-time MPC with limited or variable computation, where a certified nonzero-width estimate is sufficient; the experiments identify moderate tolerances and longer horizons as a favorable operating regime. Future work will develop problem-informed dual hierarchies and cost-aware switching to warm-started full re-optimization to improve gap reduction and tail latency. Extensions to nonlinear, time-varying, and adaptive uncertainty-set MPC are also of interest.
 \vspace{-8pt}
\bibliographystyle{IEEEtran}
\bibliography{References}

@article{mayne2005robust,
  title={Robust model predictive control of constrained linear systems with bounded disturbances},
  author={Mayne, David Q and Seron, Mar{\'\i}a M and Rakovi{\'c}, Sa{\v{s}}a V},
  journal={Automatica},
  volume={41},
  number={2},
  pages={219--224},
  year={2005},
  publisher={Elsevier}
}

@article{zhang2017robust,
  title={Robust optimal control with adjustable uncertainty sets},
  author={Zhang, Xiaojing and Kamgarpour, Maryam and Georghiou, Angelos and Goulart, Paul and Lygeros, John},
  journal={Automatica},
  volume={75},
  pages={249--259},
  year={2017},
  publisher={Elsevier}
}

@inproceedings{kim2018robust,
  title={Robust model predictive control with adjustable uncertainty sets},
  author={Kim, Yeojun and Zhang, Xiaojing and Guanetti, Jacopo and Borrelli, Francesco},
  booktitle={2018 IEEE Conference on Decision and Control (CDC)},
  pages={5176--5181},
  year={2018},
  organization={IEEE}
}

@inproceedings{raghuraman2021tube,
  title={Tube-based robust MPC with adjustable uncertainty sets using zonotopes},
  author={Raghuraman, Vignesh and Koeln, Justin P},
  booktitle={2021 American Control Conference (ACC)},
  pages={462--469},
  year={2021},
  organization={IEEE}
}

@inproceedings{kerrigan2001robust,
  title={Robust feasibility in model predictive control: Necessary and sufficient conditions},
  author={Kerrigan, Eric C and Maciejowski, Jan M},
  booktitle={2001 IEEE 40th Conference on Decision and Control (CDC)},
  pages={728--733},
  year={2001},
  organization={IEEE}
}

@article{parsi2024once,
  title={Once upon a time step: A closed-loop approach to robust MPC design},
  author={Parsi, Anilkumar and Bartos, Marcell and Srivastava, Amber and Gros, Sebastien and Smith, Roy S},
  journal={IEEE Transactions on Automatic Control},
  volume={70},
  number={2},
  pages={1297--1303},
  year={2024},
  publisher={IEEE}
}

@inproceedings{zhang2019safe,
  title={Safe and near-optimal policy learning for model predictive control using primal-dual neural networks},
  author={Zhang, Xiaojing and Bujarbaruah, Monimoy and Borrelli, Francesco},
  booktitle={2019 American Control Conference (ACC)},
  pages={354--359},
  year={2019},
  organization={IEEE}
}

@inproceedings{skibik2021feasibility,
  title={Feasibility governor for linear model predictive control},
  author={Skibik, Terrence and Liao-McPherson, Dominic and Cunis, Torbj{\o}rn and Kolmanovsky, Ilya and Nicotra, Marco M},
  booktitle={2021 American Control Conference (ACC)},
  pages={2329--2335},
  year={2021},
  organization={IEEE}
}

@article{carson2013safety,
  title={A robust model predictive control algorithm augmented with a reactive safety mode},
  author={Carson III, John M and A{\c{c}}{\i}kme{\c{s}}e, Beh{\c{c}}et and Murray, Richard M and MacMartin, Douglas G},
  journal={Automatica},
  volume={49},
  number={5},
  pages={1251--1260},
  year={2013},
  publisher={Elsevier}
}

@article{tondel2003algorithm,
  title={An algorithm for multi-parametric quadratic programming and explicit MPC solutions},
  author={T{\o}ndel, Petter and Johansen, Tor Arne and Bemporad, Alberto},
  journal={Automatica},
  volume={39},
  number={3},
  pages={489--497},
  year={2003},
  publisher={Elsevier}
}

@article{ferreau2008online,
  title={An online active set strategy to overcome the limitations of explicit MPC},
  author={Ferreau, Hans Joachim and Bock, Hans Georg and Diehl, Moritz},
  journal={International Journal of Robust and Nonlinear Control},
  volume={18},
  number={8},
  pages={816--830},
  year={2008},
  publisher={Wiley Online Library}
}

@article{zeilinger2011realtime,
  title={Real-time suboptimal model predictive control using a combination of explicit MPC and online optimization},
  author={Zeilinger, Melanie Nicole and Jones, Colin Neil and Morari, Manfred},
  journal={IEEE transactions on automatic control},
  volume={56},
  number={7},
  pages={1524--1534},
  year={2011},
  publisher={IEEE}
}

@article{liao2020time,
  title={Time-distributed optimization for real-time model predictive control: Stability, robustness, and constraint satisfaction},
  author={Liao-McPherson, Dominic and Nicotra, Marco M and Kolmanovsky, Ilya},
  journal={Automatica},
  volume={117},
  pages={108973},
  year={2020},
  publisher={Elsevier}
}

@article{garone2017reference,
  title={Reference and command governors for systems with constraints: A survey on theory and applications},
  author={Garone, Emanuele and Di Cairano, Stefano and Kolmanovsky, Ilya},
  journal={Automatica},
  volume={75},
  pages={306--328},
  year={2017},
  publisher={Elsevier}
}

@article{diehl2005real,
  title={A real-time iteration scheme for nonlinear optimization in optimal feedback control},
  author={Diehl, Moritz and Bock, Hans Georg and Schl{\"o}der, Johannes P},
  journal={SIAM Journal on control and optimization},
  volume={43},
  number={5},
  pages={1714--1736},
  year={2005},
  publisher={SIAM}
}

@article{greenberg2000sensitivity,
  title={Simultaneous primal-dual right-hand-side sensitivity analysis from a strictly complementary solution of a linear program},
  author={Greenberg, Harvey J},
  journal={SIAM Journal on Optimization},
  volume={10},
  number={2},
  pages={427--442},
  year={2000},
  publisher={SIAM}
}

@article{hosseinzadeh2023robust,
  title={Robust-to-early termination model predictive control},
  author={Hosseinzadeh, Mehdi and Sinopoli, Bruno and Kolmanovsky, Ilya and Baruah, Sanjoy},
  journal={IEEE transactions on automatic control},
  volume={69},
  number={4},
  pages={2507--2513},
  year={2023},
  publisher={IEEE}
}

@inproceedings{pavlov2019early,
  title={Early termination of NMPC interior point solvers: Relating the duality gap to stability},
  author={Pavlov, Andrei and Shames, Iman and Manzie, Chris},
  booktitle={2019 18th European Control Conference (ECC)},
  pages={805--810},
  year={2019},
  organization={IEEE}
}

@article{xie2023maximal,
  title={Maximal admissible disturbance constraint set for tube-based model predictive control},
  author={Xie, Huahui and Dai, Li and Sun, Zhongqi and Xia, Yuanqing},
  journal={IEEE Transactions on Automatic Control},
  volume={68},
  number={11},
  pages={6773--6780},
  year={2023},
  publisher={IEEE}
}

\end{document}